\let\savedbbchi\bbchi
\renewcommand\harvardyearright[1]{.} 
\let\bbchi\savedbbchi
\newcommand{\Ub}{{\bf U}}
\newcommand{\vb}{{\bf v}}
\newcommand{\hv}{\hat{\bf v}}
\newcommand{\x}{{\bf x}}
\newcommand{\ie}{{\it i.e.}}
\newcommand{\eg}{{\it e.g.}}
\newtheorem{theorem}{Theorem}[section]
\newtheorem{proposition}[theorem]{Proposition}
\begin{document}

\title{Stability of a non-local kinetic model for cell migration with density dependent orientation bias}


\author{Nadia Loy \thanks{Department of Mathematical Sciences ``G. L. Lagrange'', Politecnico di Torino, Corso Duca degli Abruzzi 24, 10129 Torino, Italy, and Department of Mathematics ``G. Peano'', Via Carlo Alberto 10 ,10123 Torino, Italy
                (\texttt{nadia.loy@polito.it})}\and
        Luigi Preziosi\thanks{Department of Mathematical Sciences ``G. L. Lagrange'', Dipartimento di Eccellenza 2018-2022, Politecnico di Torino, Corso
                Duca degli Abruzzi 24, 10129 Torino, Italy
                (\texttt{luigi.preziosi@polito.it})}
                \thanks{Corresponding author: \texttt{nadia.loy@polito.it}}}
                
\maketitle

\begin{abstract}
The aim of the article is to study the stability of a non-local kinetic model 
proposed by \cite{Loy_Preziosi}. We split the population in two subgroups and perform a linear stability analysis. We show that pattern formation results from modulation of one non-dimensional parameter that depends on the tumbling frequency, the sensing radius, the mean speed in a given direction, the uniform configuration density and the tactic response to the cell density. Numerical simulations show that our linear stability analysis predicts quite precisely the ranges of parameters determining instability and pattern formation. We also extend the stability analysis in the case of different mean speeds in different directions. In this case, for parameter values leading to instability travelling wave patterns develop.

\textbf{Keywords}: Kinetic model, Non-local interactions, Stability, Cell migration.
\end{abstract}

\section{Introduction}

Cells decide where to go by sensing the surrounding environment, extending their protrusions over a distance that can reach several cell diameters. 
In order to describe such a non-local action several mathematical models have been recently proposed in the literature.

\cite{Othmer_Hillen.02} and \cite{Hillen_Painter_Schmeiser.06} introduced a finite sampling radius and defined a non-local gradient as the average of the external field on a surface which represents the membrane of the cell.  In particular, the Authors derived a macroscopic diffusive model with a non-local gradient both from a position jump process and from a velocity jump process by postulating that the non-local sensing is a bias of higher order.

With the aim of modelling cell-cell adhesion and haptotaxis \cite{Armstrong_Painter_Sherratt.06} proposed a macroscopic integro-differential equation where the integral over a finite radius is in charge of describing non-local sensing. Recently, \cite{Butt} derived this model from a space jump process, while \cite{Buttenschoen_Thesis}, \cite{Butt_20} studied the steady states and bifurcations of the macroscopic adhesion model and their stability. Further studies concerning these models on bounded domains are proposed in \citep{Butt_19}. 
Other macroscopic models describing cell migration with non-local measures of the environment were proposed by \cite{Painter_Sherratt.2010, Painter.Gerish.15} and \cite{Painter_Hillen.02}. \cite{Schmeiser_Nouri} considered a kinetic model with velocity jumps biased towards the chemical concentration gradient.  Similar equations were also proposed in 2D set-ups by \cite{Col_Sci_Tos.15, Col_Sci_Prez.17},  and applied to model crowd dynamics and traffic flow for instance by \cite{Tosin_Frasca}. 

\cite{Eftimie_deVries_Lewis} proposed a non-local kinetic models including repulsion, alignement and attraction. They distinguish cells going along the two directions of a one-dimensional set-up with the possibility of switching between the two directions keeping the same speed (they thus have two velocities). The one-dimensional kinetic model takes then a discrete velocity structure. Its integration shows a wide variety of patterns. The linear stability analysis of the model is then performed in \cite{Eftimie.deVries.Lewis}. \cite{Eftimie} derived the macroscopic limits of this model. In \cite{Eftimie.17} and  \cite{Bitsouni2018} the model was applied to model,  respectively, tumour dynamics and cell polarisation in heterogeneous cancer cell populations. The linear stability analysis of a kinetic chemotaxis equation coupled with a macroscopic diffusion equation for the chemical dynamics is presented by \cite{Perthame_2018}. 

\cite{Loy_Preziosi} proposed a non-local kinetic model with double-bias on the basis of the observation that  different fields can influence respectively cell polarization and speed. Then the turning operator is characterized by the presence of two integrals, one determining the probability of polarizing in a certain direction after averaging the sensing of a chemical or mechanical cue over a finite neighborhood, and the other setting the probability of moving with a certain speed in the chosen direction after averaging the sensing of another chemical or mechanical cue over a possibly different finite radius.
The model was then modified in \cite{Loy_Preziosi2} to take into account of physical limits of migrations, that might hamper the real possibility of cells of measuring beyond physical barriers or to move in certain regions because of physically constraining situations, such as too dense extracellular matrix or cell overcowding.
In such papers it was shown that the presence of density dependent cues, such as cell-cell adhesion and volume filling, may generate instabilities and the formation of patterns.
In fact, on the one hand cells may be attracted due to the mutual interaction of transmembrane adhesion molecules ($\eg$, cadherin complexes). On the other hand, they may want to stay away from overcrowded areas. 
It seems that, depending on the sensing kernel and on other modelling parameters, both the wish to stay together and to stay away might lead to instability. 

In this paper we study the stability of the homogeneous configuration when density dependent cues influence cell polarization. The case in which the density distribution also affects cell speed will be treated in a following paper. 

With this aim in mind, in Section 2 we briefly recall the non-local kinetic model, then focalizing to the case of orientational biases.
Restricting to the one-dimensional case, in Section 3 the linear stability analysis is performed, first for a general speed distribution function and then in the particular case of a Dirac delta. At this stage the sensing kernel is still general and it is proved that if it is a non-increasing function of the distance, then staying away strategies are always stable. On the contrary,  Section 3.1 shows that a localized sensing kernel, $\ie$ a Dirac delta, might lead to instability of a finite wavelength if for instance, cell speed is sufficiently small, or the turning rate or the sensing radius is sufficiently high, or in the case of volume filling effects, if cell density is sufficiently high. Section 3.2 and Section 3.3 then respectively focus on a homogeneous and a decreasing sensing kernel over a finite range, specifically a Heaviside function and a ramp function. It these cases, as well as for the localized sensing kernel, cell-cell adhesion strategies lead to long wave instabilities. Section 4 reports some simulations of the cases above, while Section 5 discusses the case of an asymmetric speed distribution function in the two directions. In the unstable case this leads to the formation of travelling waves instabilities similar to those reported by \cite{Eftimie.deVries.Lewis}, \cite{Eftimie_deVries_Lewis}, \cite{Eftimie}, \cite{Eftimie2}.
A final section  draws some conclusions pointing out  possible developments.

\section{The model}

In the model introduced by \cite{Loy_Preziosi}
the cell population is described at a mesoscopic level by the distribution density $p = p(t,\x, v,\hv)$ parametrized by the time $t>0$, the position $\x \in \Omega \subseteq \mathbb{R}^d$, the speed $v\in \mathbb{R}_+$ and the polarization direction $\hv\in \mathbb{S}^{d-1}$ where  $\mathbb{S}^{d-1}$ is the unit sphere boundary in $\mathbb{R}^d$.  
We remark that the distribution function $p$ depends separately on velocity modulus and direction, instead of the velocity vector $\vb=v\hv$. This is due to the need  of separating the subcellular  mechanisms governing cell polarization and motility. In fact, cells respond both to tactic factors affecting the choice of the direction, and to kinetic factors, typically of mechanical origins, influencing cell speed.

The mesoscopic model consists in the transport equation for the cell distribution
\begin{equation}\label{transport.general}
\dfrac{\partial p}{\partial t}(t,\x,v,\hv) + \vb\cdot \nabla p(t,\x,v,\hv) =  \mathcal{J} [p] (t,\x,v,\hv)
\end{equation}
where the operator $\nabla$ denotes the spatial gradient then
coupled with proper initial  and boundary conditions. In particular, we shall consider no-flux boundary conditions that are defined as \citep{Plaza}
\begin{equation}\label{noflux}
\int_{\mathbb{R}_+} \int_{\mathbb{S}^{d-1}} p(t,\x,v,\hv)\vb\cdot {\bf n}(\x)d\hv \, dv=0, \quad \forall \x \in \partial \Omega, \quad t>0
\end{equation}
being ${\bf n}(\x)$ the outer normal to the boundary $\partial \Omega$ in the point $\x$. 
Equation \eqref{noflux} implies that there is no mass flux across the boundary \citep{Lemou}.

A macroscopic description for the cell population can be classically recovered  through the definition of moments of the distribution function $p$. For instance,  the cell number density will be given by
\begin{equation}\label{def_rho}
\rho(t,\x) = \displaystyle\int_{\mathbb{S}^{d-1}}\int_{\mathbb{R}_+} p(t,\x,v,\hv) \,dv\,d\hv\,,
\end{equation}
and the cell mean velocity by
\begin{equation}\label{mean.U}
\Ub(t,\x) = \dfrac{1}{\rho(t,\x)}\displaystyle\int_{\mathbb{S}^{d-1}}\int_{\mathbb{R}_+} p(t,\x,v,\hv)\vb \,dv\,d\hv\,.
\end{equation}



The term $\mathcal{J}[p](t,\x,v,\hv)$, named {\it turning operator},
is an  integral operator that describes the change in velocity which is not due to free-particle transport. It may describe the classical run and tumble behaviors,  random re-orientations, which, however, may be biased  by external cues. In the present case, the turning operator will be the implementation of a velocity-jump process in a kinetic transport equation as introduced by \cite{Stroock} and then by \cite{Alt.88}. 
The turning operator that  we are going to consider is
\begin{equation}\label{J_r.S}
 \mathcal{J}[p](t,\x,v,\hv) = \mu (\x) \, \Big( \rho(t,\x)  T(\x,v,\hv) - p(t,\x,v,\hv) \Big) \,,
\end{equation}
which is obtained assuming that cells retain no memory of their velocity prior to the re-orientation, and also turning rates do not depend on the orientation of the individual cell.

Following \cite{Loy_Preziosi}, we consider here a transition probability that depends on the non-local sensing of the macroscopic density of the cell population in a neighborhood of the cell, that writes as
\begin{equation}\label{distribution.g.r}
T[\rho](\x,v,\hv)=c(t,\x)\int_{\mathbb{R}_+}\gamma_R(\lambda) b\big(\rho(t,\x + \lambda\hv)\big)\, d\lambda \,\psi(v|\hv),
\end{equation}
where $c(t,\x)$ is a normalization constant, that is 
\[
c^{-1}(t,\x)=\displaystyle\int_{\mathbb{S}^{d-1}}\left[\int_{\mathbb{R}_+}b\left(\rho(t,\x+\lambda\hv)\right)\gamma_R(\lambda)d\lambda\right]\psi(v|\hv) \, d\hv\,,
\]
so that the integral of $T$ over the velocity space is one. In this way, $T$ is a mass preserving transition probability.

The term $b$ describes the response of the cells to the tactic cue, in this case the cell density $\rho$ itself, around $\x$ along the direction $\hv$ and, therefore, the bias intensity in the direction $\hv$  while the sensing kernel $\gamma_R(\lambda)$ weights the collected density signal with respect to the distance $\lambda$ from $\x$. 
In particular,  $\gamma_{R}$, that we shall assume to be a positive valued $L^1(\mathbb{R}_+)$ function, has a compact support in $[0,R]$ where $R$ is the maximum extension of cell protrusions, determining the furthest points cells can reach to measure the external signals. So, integrals over $\mathbb{R}_+$ are actually integrals over the finite interval $[0,R]$.  Specifically, if 
\begin{itemize}
\item  $\gamma_R(\lambda)=\delta(\lambda-R)$ is a Dirac delta, then cells only measure the information perceived on a spherical surface of given radius $R$,
\item if $\gamma_R(\lambda)=H (R-\lambda)$ is a Heaviside function, then cells explore the whole volume of the sphere centered in $\x$ with radius $R$ and weight the information uniformly, or 
\item if $\gamma_R(\lambda)$ is a decreasing function of $\lambda$, then closer information play a bigger role with respect to farther ones, taking for instance into account that the probability of making longer protrusions decreases with the distance, so the sensing of closer regions is more accurate. 
\end{itemize}

In conclusion, the integral in \eqref{distribution.g.r} is such that if the signal is stronger in the direction $\hat\vb$, then there will be a higher probability for the cell to move along $\hat\vb$ than along $-\hat\vb$. Other functional forms for the $b$ term could be considered as discussed by \cite{Loy_Preziosi}.

The function $\psi=\psi(v|\hv)$ is the density distribution of the speeds that we assume to depend on the direction $\hv \in \mathbb{S}^{d-1}$. In fact,  \cite{Loy_Preziosi} introduce a density distribution $\psi$ describing the probability of having a certain speed in a given direction given the non local sensing of a tactic external cue in that direction. Therefore, $\psi$ also depends on the spatial variable through the kinetic cue. For simplicity we will drop this extra dependency.

In the following the mean of the probability density function $\psi$ will be denoted by $V$ and its variance by $s^2$. As $\psi$ depends on the direction $\hv$, $V$ and $s^2$ will depend on the direction $\hv$ as well.

\section{Linear stability analysis}\label{Sec:lin.inst}

In order to perform a stability analysis of the uniform configuration, we will assume that $\mu$ is constant and for the moment we will start assuming that $\psi=\psi(v|\hv)$ does not depend on $\hv$.  We will treat the case in which $\psi=\psi(v|\hv)$ in Section 5. Hence, we consider the equation 
\begin{equation}\label{eq:dir.1D}
\dfrac{\partial p}{\partial t}(t,\x,v,\hv) + v\hv\cdot \nabla p(t,\x,v,\hv) =  \mu\left[ 
c(t,\x)\rho(t,\x)\displaystyle\int_{\mathbb{R}_+}\gamma_R(\lambda)b\left(\rho(t,\x+\lambda\hv)\right)\,d\lambda \,\psi(v)
-p(t,\x,v,\hv)\right].
\end{equation}
We will also carry out the analysis in the one-dimensional case and call ${\bf e}$ and  $-{\bf e}$ the two possible directions characterizing the one dimensional problem. The population is then splitted into two subgroups $p^+$ and $p^-$ corresponding to the groups of cells respectively going to the right and to the left, so that
\begin{equation}\label{def:p}
p(t,x,v,\hv)= p^+(t,x,v)\delta(\hv-{\bf e})+p^-(t,x,v)\delta(\hv+{\bf e})\,.
\end{equation}
Coherently, we define 
\[
\rho(t,x)=\rho^+(t,x)+\rho^-(t,x)
\quad{\rm with}\quad \rho^\pm(t,x)=\int_{\mathbb{R}_+} p^\pm(t,x,v)\,dv\,.
\]
The system of equations satisfied by $p^+$ and $p^-$ is
\begin{equation}\label{eq:sist.p_gen}
\begin{array}{lc}
\dfrac{\partial p^+}{\partial t}(t,x,v) + v\dfrac{\partial p^+}{\partial x}(t,x,v) =  \mu\left[ \rho(t,x)
T^+[\rho](v)-p^+(t,x,v)\right]\\[12pt]
\dfrac{\partial p^-}{\partial t}(t,x,v) - v\dfrac{\partial p^-}{\partial x}(t,x,v) =  \mu\left[ \rho(t,x)
T^-[\rho](v)-p^-(t,x,v)\right],
\end{array}
\end{equation}
where
\begin{equation}\label{T+}
T^+[\rho](v)=\dfrac{\displaystyle\int_{\mathbb{R}_+}\gamma_R(\lambda)b\left(\rho(t,x+\lambda)\right)\,d\lambda}
{\displaystyle\int_{\mathbb{R}_+}\gamma_R(\lambda)\left[b(\rho(t,x+\lambda))+b(\rho(t,x-\lambda))\right]d\lambda}\psi(v)
\end{equation}
and
\begin{equation}\label{T-}
T^-[\rho](v)=\dfrac{\displaystyle\int_{\mathbb{R}_+}\gamma_R(\lambda)b\left(\rho(t,x-\lambda)\right)\,d\lambda}
{\displaystyle\int_{\mathbb{R}_+}\gamma_R(\lambda)\left[b(\rho(t,x+\lambda))+b(\rho(t,x-\lambda))\right]d\lambda}.
\psi(v)
\end{equation}
It can be proved that the local asymptotic equilibrium states of \eqref{eq:sist.p_gen} are \citep{Loy_Preziosi}
\begin{equation}\label{asympt}
p^+_{\infty}=\rho_{\infty}T^+[\rho_{\infty}](v), \qquad p^-_{\infty}=\rho_{\infty}T^-[\rho_{\infty}](v).
\end{equation}
They are  homogeneous if and only if 
\begin{equation}\label{homo}
p^+_{\infty}(v)=p^-_{\infty}(v)=\rho_{\infty}\dfrac{\psi(v)}{2}.
\end{equation}
Hence, once $\psi$ is chosen, all the possible spatially homogeneous and stationary solutions are determined by the choice of $\rho_{\infty}$.
We explicitly notice that the macroscopic densities will be $\rho^+_{\infty}=\rho^-_{\infty}=\dfrac{\rho_{\infty}}{2}$. 

In order to  perform the stability analysis of this configuration, we consider a small perturbation of the homogeneous solution \eqref{homo} as
\[
p^+(t,x,v)=p^+_{\infty}(v)+\hat{p}^{+}(t,x,v), \qquad p^-(t,x,v)=p^-_{\infty}(v)+\hat{p}^{-}(t,x,v).
\]
Hence,
\[
\rho(t,x)=\rho_{\infty}+ \displaystyle\int_{\mathbb{R}_+} \int_{\mathbb{S}^{d-1}} \left[ \hat{p}^+(t,x,v)\delta(\hv-{\bf e})+\hat{p}^-(t,x,v)\delta(\hv+{\bf e})\right]d\hv \,dv = \rho_{\infty}+\hat{\rho}^+(t,x)+\hat{\rho}^-(t,x)
\]
where $\hat{\rho}^{\pm}(t,x)=\displaystyle\int_{\mathbb{R}_+} \hat{p}^{\pm}(t,x,v) dv$ and  we define $\hat{\rho}=\hat{\rho}^++\hat{\rho}^-$.
Thus, the system \eqref{eq:sist.p_gen}  becomes
\[
\begin{split}\label{eq:sis.p.lin}
\dfrac{\partial \hat{p}^+}{\partial t}(t,x,v) + v\dfrac{\partial \hat{p}^+}{\partial x} (t,x,v) =  
\mu\left[\big(\rho_{\infty}+\hat{\rho}\big)T^+[\rho_{\infty}+\hat{\rho}](v)-p^+_{\infty}(v)-\hat{p}^+(t,x,v)\right],\\
\dfrac{\partial \hat{p}^-}{\partial t}(t,\x,v) - v\dfrac{\partial \hat{p}^+}{\partial x} (t,x,v) =  
\mu\left[\big(\rho_{\infty}+\hat{\rho}\big)T^-[\rho_{\infty}+\hat{\rho}](v)-p^+_{\infty}(v)-\hat{p}^-(t,x,v)\right].
\end{split}
\]
We now need to linearize the transition probabilities $T^{\pm}[\rho_{\infty}+\hat{\rho}](v)$ where, for instance,
\[
T^+[\rho_{\infty}+\hat{\rho}](v)=\dfrac{\displaystyle\int_{\mathbb{R}_+}b(\rho_{\infty}+\hat{\rho}(t,x+\lambda))\gamma_R(\lambda) \, d\lambda}{\displaystyle\int_{\mathbb{R}_+} \left[b(\rho_{\infty}+\hat{\rho}(t,x+\lambda))+b(\rho_{\infty}+\hat{\rho}(t,x-\lambda))\right]\gamma_R(\lambda) \, d\lambda}\psi(v)\,.
\]
Assuming $\hat{\rho}$ small (and then neglecting perturbation of higher order), we may perform a Taylor expansion and write
\begin{equation}
\begin{array}{cl}
T^+[\rho_{\infty}+\hat{\rho}](v)&\approx
\dfrac{\displaystyle\int_{\mathbb{R}_+}\left[b(\rho_{\infty})+b'(\rho_{\infty})\hat{\rho}(x+\lambda)\right]\gamma_R(\lambda) \, d\lambda}{\displaystyle\int_{\mathbb{R}_+}\left[b(\rho_{\infty})+b'(\rho_{\infty})\hat{\rho}(x+\lambda)+b(\rho^{\infty})+b'(\rho_{\infty})\hat{\rho}(x-\lambda)\right]\gamma_R(\lambda) \, d\lambda}\psi(v)
\\[25pt]
&\approx \dfrac{1+\dfrac{b'(\rho_{\infty})}{b(\rho_{\infty})\Gamma_R}\displaystyle \int_{\mathbb{R}_+}\hat{\rho}(t,x+\lambda)\gamma_R(\lambda) \, d\lambda}{1+\dfrac{b'(\rho_{\infty})}{b(\rho_{\infty})\Gamma_R}\displaystyle\int_{\mathbb{R}_+}\dfrac{\hat{\rho}(t,x+\lambda)+\hat{\rho}(t,x-\lambda)}{2}\gamma_R(\lambda) \, d\lambda}\dfrac{\psi(v)}{2}
\end{array}
\end{equation}
being $\Gamma_R=\displaystyle \int_{\mathbb{R}_+} \gamma_R(\lambda) d\lambda$. Expanding the denominator, we eventually have

\begin{equation}
T^+[\rho_{\infty}+\hat{\rho}](v)\approx
\left[ 1+\dfrac{b'(\rho_{\infty})}{b(\rho_{\infty})\Gamma_R}\displaystyle\int_{\mathbb{R}_+}\dfrac{\left[\hat{\rho}(t,x+\lambda)-\hat{\rho}(t,x-\lambda)\right] }{2}\gamma_R(\lambda)\, d\lambda\right]\dfrac{\psi(v)}{2}\,.
\end{equation}

Analogously
\begin{equation}
T^-[\rho_{\infty}+\hat{\rho}](v) \approx  \left[ 1-\dfrac{b'(\rho_{\infty})}{b(\rho_{\infty})\Gamma_R}\displaystyle\int_{\mathbb{R}_+}\dfrac{\left[\hat{\rho}(t,x+\lambda)-\hat{\rho}(t,x-\lambda)\right] }{2}\gamma_R(\lambda)\, d\lambda\right]\dfrac{\psi(v)}{2}\,.
\end{equation}
Therefore, the right hand sides of the system \eqref{eq:sis.p.lin} become
\[
\begin{array}{cl}
&\big(\rho_{\infty}+\hat{\rho}(t,x)\big)T^{\pm}[\rho_{\infty}+\hat{\rho}](v)-p^{\pm}_{\infty}(v)-\hat{p}^{\pm}(t,x,v)\\
&\approx \overbrace{\rho_{\infty}\dfrac{\psi(v)}{2}-p^{\pm}_{\infty}(v)}^{=0}+
\left[ \hat{\rho}(t,x)\pm \rho_{\infty}\dfrac{b'(\rho_{\infty})}{b(\rho_{\infty})\Gamma_R}\displaystyle\int_{\mathbb{R}_+}\dfrac{\left[\hat{\rho}(t,x+\lambda)-\hat{\rho}(t,x-\lambda)\right] }{2}\gamma_R(\lambda) d\lambda\right]\dfrac{\psi(v)}{2}-\hat{p}^{\pm}(v)\,.
\end{array}
\]
Let us now consider perturbations in the form
\[
\hat{p}^{\pm}(t,x,v)=g^{\pm}(v)e^{ikx+\sigma t}
\]
where $g^{\pm}$ have densities defined as $\rho_{g^{\pm}}=\int_{\mathbb{R}_+} g^{\pm}(v)dv$ and, then, $\hat{\rho}^{\pm}=\rho_{g^{\pm}}e^{ikx+\sigma t}$. 

Substitution in \eqref{eq:sis.p.lin} leads to
\begin{equation}\label{eq:lin:int}
\sigma g^{\pm} \pm i kv g^{\pm}+\mu g^{\pm}
=\mu\left[ \rho_g\pm \rho_{\infty}\dfrac{b'(\rho_{\infty})}{b(\rho_{\infty})\Gamma_R}\rho_g
\displaystyle \int_{\mathbb{R}_+}\dfrac{e^{ik\lambda}-e^{-ik\lambda}}{2} \gamma_R(\lambda)d\lambda\right]\dfrac{\psi(v)}{2}\,,
\end{equation}
where we set $\rho_g=\rho_g^++\rho_g^-$.
Now, as $e^{ik\lambda}-e^{-ik\lambda}=2i\sin(k\lambda)$,
defining 
\begin{equation}\label{gamma_trans}
\hat{\gamma}_R(k)=\dfrac{1}{\Gamma_R}\int_{\mathbb{R}_+}\sin(k\lambda)\gamma_R(\lambda)d\lambda\,,
\end{equation}
the normalized unilateral sine transform of $\gamma_R$, the system \eqref{eq:lin:int} rewrites as
\begin{equation}\label{eq:sis.g}
\begin{cases}
\left( \sigma +ikv+\mu\right)g^+=\mu\left[1+
i\mathcal{B} \hat{\gamma}_R(k) \right]\dfrac{\psi(v)}{2}\displaystyle\int_{\mathbb{R}_+}\big[ g^+(v)+g^-(v)\big]dv,\\[12pt]
\left( \sigma -ikv+\mu\right)g^-=\mu\left[1-
i\mathcal{B} \hat{\gamma}_R(k) \right]\dfrac{\psi(v)}{2}\displaystyle\int_{\mathbb{R}_+}\big[ g^+(v)+g^-(v)\big]dv,
\end{cases}
\end{equation}
where 
\begin{equation}\label{B}
\mathcal{B}=\dfrac{\rho_{\infty}b'(\rho_{\infty})}{b(\rho_{\infty})}\,.
\end{equation}
We remark that the dimensionless number  $\mathcal{B}$ can be either negative or positive, according to the fact that $b$ is a decreasing or an increasing function of $\rho$. 
From the phenomenological point of view the former case ($\mathcal{B}<0$ or $b'(\rho_\infty)<0$)
corresponds to a predisposition of cells at a density $\rho_{\infty}$ to re-polarize toward regions that have lower cell densities and move away from crowded areas, the latter case ($\mathcal{B}>0$ or $b'(\rho_\infty)>0$) 
corresponds to a predisposition of cells to re-orient toward regions with a density higher than $\rho_{\infty}$, $\eg$ a sort of adhesion-like behaviour due to the fact that cells wants to stay together.  

We observe that by summing the two equations in \eqref{eq:sis.g} we readily have
\[
\big(\sigma +\mu \big)\big(g^++g^-\big)+ikv\big(g^+-g^-\big)=\mu \psi(v) \int_{\mathbb{R}_+}\left[ g^+(v)+g^-(v)\right]dv.
\]
If we then integrate over ${\mathbb{R}_+}$, by defining $\rho_g U_g=\displaystyle \int_{\mathbb{R}_+}\left(g^+-g^-\right)v\,dv$ the mean momentum of the perturbation, we get
\[
\left( \sigma + \mu \right)\rho_g+ik\rho_g U_g=\mu \rho_g,
\]
and, then, the mean speed of the perturbation is
\[
U_g=i\dfrac{\sigma}{k}.
\]
Looking for unstable situations (so that it is sure that we are not dividing by zero) the system \eqref{eq:sis.g} can be written as
\begin{equation}\label{eq:sis.g.2}
\begin{cases}
g^+=\mu\dfrac{1+i\mathcal{B} \hat{\gamma}_R(k)} {\sigma +ikv+\mu}\dfrac{\psi(v)}{2}
\displaystyle\int_{\mathbb{R}_+}\big[ g^+(v)+g^-(v)\big]dv,\\[12pt]
g^-=\mu\dfrac{1-i\mathcal{B} \hat{\gamma}_R(k)} {\sigma -ikv+\mu}\dfrac{\psi(v)}{2}
\displaystyle\int_{\mathbb{R}_+}\big[ g^+(v)+g^-(v)\big]dv.
\end{cases}
\end{equation}

If we integrate the two equations over ${\mathbb{R}_+}$ and sum them, we obtain the following solvability condition (for non trivial solutions)
\begin{equation}
\dfrac{\mu}{2}\int_{\mathbb{R}_+}\left[
\dfrac{1+i\mathcal{B} \hat{\gamma}_R(k)} {\sigma +ikv+\mu}+
\dfrac{1-i\mathcal{B} \hat{\gamma}_R(k)} {\sigma -ikv+\mu}\right]\psi(v)\,dv=1,
\end{equation}
or 
\begin{equation}\label{solvability}
\mu\int_{\mathbb{R}_+}\dfrac{\sigma +\mu+\mathcal{B} \hat{\gamma}_R(k) kv}{(\sigma +\mu)^2+k^2v^2}\psi(v)\,dv=1.
\end{equation}

In order to give an analytical discussion of the result, let us take, 
as an example, $\psi(v)=\delta(v-V)$. In this case the integral condition \eqref{solvability} takes the algebraic form
\begin{equation}
\mu\dfrac{\sigma +\mu+\mathcal{B} \hat{\gamma}_R(k) kV}{(\sigma +\mu)^2+k^2V^2}=1\,,
\end{equation}
so that the dispersion relation reads
\[
\sigma^2+\mu\sigma +k^2V^2-\mu \mathcal{B} kV\hat{\gamma}_R(k)=0.
\]
The most dangerous eigevalue is then
\begin{equation}\label{rel.disp.2}
\sigma=\dfrac{-\mu+\sqrt{\Delta}}{2},\qquad {\rm with} \quad \Delta =
\mu^2-4k^2V^2+4\mu \mathcal{B} kV\hat{\gamma}_R(k).
\end{equation}
Therefore, the instability condition $\Re e(\sigma)>0$, given by $\Delta>0$, is 
\begin{equation}\label{inst.cond_pre}
\mathcal{B} \dfrac{\hat{\gamma}_R(k)}{k}>\dfrac{V}{\mu}.
\end{equation}
that, introducing the dimensionless number
\begin{equation}
\mathcal{V}=\dfrac{V}{\mu R}\,,
\end{equation}
 may be rewritten as
\begin{equation}\label{inst.cond}
\mathcal{B} \dfrac{\hat{\gamma}_R(k)}{Rk}>\mathcal{V}.
\end{equation}

Therefore, one can conclude that the following proposition holds.

\begin{proposition} 
If $\mathcal{B}\le 0$ and $\gamma_R(\lambda)>0$ is non increasing, then the homogeneous configuration is always stable.

 If $\mathcal{B}>0$ and $\gamma_R(\lambda)>0$ is such that $\lambda \gamma_R(\lambda)\in L^1(\mathbb{R}_+)$, then long waves $k\approx 0$ are unstable for sufficiently small values of $\mathcal{V}=\frac{V}{\mu\mathcal{B}}$.
\end{proposition}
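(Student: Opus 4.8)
The plan is to read both statements directly off the instability condition \eqref{inst.cond_pre}, namely $\mathcal{B}\,\hat{\gamma}_R(k)/k > V/\mu$, by analysing the single scalar function $k\mapsto\hat{\gamma}_R(k)/k$. Everything reduces to its sign and its small-$k$ behaviour. Since $\hat{\gamma}_R$ is a sine transform it is odd in $k$, so $\hat{\gamma}_R(k)/k$ is even; moreover \eqref{inst.cond_pre} is invariant under $k\mapsto -k$, and I may therefore restrict throughout to $k>0$.

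For the first claim ($\mathcal{B}\le 0$, with $\gamma_R$ positive and non-increasing), the key step is to prove that $\hat{\gamma}_R(k)\ge 0$ for every $k>0$. I would use the classical half-period argument: after the substitution $u=k\lambda$, write $\Gamma_R\,\hat{\gamma}_R(k)=\tfrac1k\int_0^\infty \gamma_R(u/k)\sin u\,du=\tfrac1k\sum_{n\ge 0}(-1)^n a_n$, where $a_n=\int_{n\pi}^{(n+1)\pi}\gamma_R(u/k)\,|\sin u|\,du\ge 0$. Shifting $u\mapsto u+\pi$ and using monotonicity of $\gamma_R$ gives $a_{n+1}\le a_n$, so the alternating series has non-increasing terms and is hence non-negative. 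Thus $\hat{\gamma}_R(k)/k\ge 0$, so $\mathcal{B}\,\hat{\gamma}_R(k)/k\le 0<V/\mu$, the condition \eqref{inst.cond_pre} never holds, every mode has $\Re e(\sigma)\le 0$, and the homogeneous state is stable.

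For the second claim ($\mathcal{B}>0$, $\lambda\gamma_R(\lambda)\in L^1(\mathbb{R}_+)$), the key step is to identify $\lim_{k\to 0^+}\hat{\gamma}_R(k)/k$. Since $\sin(k\lambda)/k\to\lambda$ pointwise and $|\sin(k\lambda)/k|\le\lambda$ with $\lambda\gamma_R\in L^1$, dominated convergence yields $\lim_{k\to 0^+}\hat{\gamma}_R(k)/k=M$, where $M=\tfrac{1}{\Gamma_R}\int_0^\infty \lambda\,\gamma_R(\lambda)\,d\lambda>0$. Rewriting \eqref{inst.cond_pre} for $\mathcal{B}>0$ as $\hat{\gamma}_R(k)/k>V/(\mu\mathcal{B})=\mathcal{V}$, I conclude that as soon as $\mathcal{V}<M$ this inequality holds for all sufficiently small $k>0$ by continuity, so long waves $k\approx 0$ are unstable once $\mathcal{V}$ is small enough.

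The main obstacle is the positivity of the sine transform in the first part: this is exactly where the monotonicity of $\gamma_R$ enters, and it cannot be obtained from crude bounds on $|\hat{\gamma}_R(k)|$, so the alternating-series estimate is essential. I would also verify the convergence of the tail of the alternating series and the integrability of $\gamma_R$ near the endpoints of its support, but expect no difficulty there since $\gamma_R\in L^1(\mathbb{R}_+)$ is assumed throughout.
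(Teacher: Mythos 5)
Your proposal is correct and follows essentially the same route as the paper: stability for $\mathcal{B}\le 0$ via positivity of the sine transform of a non-increasing kernel, and long-wave instability for $\mathcal{B}>0$ via the limit $\hat{\gamma}_R(k)/k \to \int_0^\infty\lambda\gamma_R\,d\lambda/\Gamma_R>0$ as $k\to 0$. The only difference is that you actually supply the alternating-series (half-period) proof of $\hat{\gamma}_R(k)\ge 0$ and the dominated-convergence justification of the limit, both of which the paper merely asserts; note that your weaker conclusion $\hat{\gamma}_R(k)\ge 0$ already suffices, since the right-hand side $V/\mu$ of the instability condition is strictly positive.
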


\begin{proof}
In order to prove the statement it is enough to observe that if the sensing kernel $\gamma_R(\lambda)$ is non increasing 
then $\hat{\gamma}_R(k)> 0$  $\forall k>0$, vanishing only in the trivial cases $k=0$ or in the limit of constant sensing kernel in $\mathbb{R}_+$. 
Then condition \eqref{inst.cond} is never satisfied, being its r.h.s. strictly positive. 

On the other hand, under the stated integrability conditions
$$\lim_{k \to 0}\dfrac{\hat{\gamma}_R(k)}{k} =\dfrac{\displaystyle\int_0^\infty \lambda\gamma_R(\lambda)\,d\lambda}{\displaystyle \int_0^\infty \gamma_R(\lambda)\,d\lambda}>0.$$
Hence, if $\mathcal{B}>0$, then condition \eqref{inst.cond} is satisfied for sufficiently small ratios $\frac{V}{\mu\mathcal{B}}$ leading to instability.
\medskip
\end{proof}

We observe that the statement also holds for the relevant case of a sensing kernel with compact support, for instance 
for the Heaviside and ramp kernels that will be respectively considered in Sections \ref{sec:Heaviside} and \ref{sec:ramp}.
On the other hand, it does not hold for the localized sensing kernel $\gamma_R(\lambda)=\delta(\lambda-R)$ that will be considered in the following section. In fact, for such kernels instability is possible also for $\mathcal{B}<0$.

The second part of the proposition assures that, upon suitable integrability conditions that are satisfied by all the kernels mentioned above, there is always a value of $\frac{V}{\mu\mathcal{B}}$ that is unstable at least to long waves. 
 
We conclude this general part of the analysis by observing that in unstable situations, the maximum growth rate $\sigma_{max}=\sigma(k_{max})$ can be identified by evaluating the stationary points of 
\eqref{rel.disp.2}, that are obtained when
\begin{equation}\label{maxgrowth}
\dfrac{1}{\Gamma_R k_{max}R}
\int_{\mathbb{R}_+}[\sin (k_{max}\lambda)+k_{max}\lambda \cos (k_{max}\lambda)]\gamma_R(\lambda)\,d\lambda
=2\mathcal{V}_b,
\end{equation}
where 
\begin{equation}
\mathcal{V}_b=\mathcal{V}/\mathcal{B}.
\end{equation}

\subsection{Localized sensing kernel}
We assume now that also the sensing function is a Dirac delta, $\ie$ $\gamma_R=\delta(\lambda-R)$, so that the two equations satisfied by $p^+$ and $p^-$  \eqref{eq:sist.p_gen} specialize as
\begin{equation}\label{eq:sist.p}
\begin{split}
\dfrac{\partial p^+}{\partial t}(t,x,v) + v\dfrac{\partial p^+}{\partial x}(t,x,v) =  \mu\left[ \rho(t,x)
\dfrac{b(\rho(t,x+R))}{b(\rho(t,x+R))+b(\rho(t,x-R))}\psi(v)-p^+(t,x,v)\right],\\[12pt]
\dfrac{\partial p^-}{\partial t}(t,x,v) - v\dfrac{\partial p^-}{\partial x}(t,x,v) =  \mu\left[ \rho(t,x)
\dfrac{b(\rho(t,x-R))}{b(\rho(t,x+R)+b(\rho(t,x-R))}\psi(v)-p^-(t,x,v)\right].
\end{split}
\end{equation}
In this case, $\Gamma_R=1$ and $\hat{\gamma}_R(k)=\sin(kR)$, so that the criterium \eqref{inst.cond}
now reads
\begin{equation}\label{inst.cond.dirac}
\mathcal{B}\dfrac{\sin(kR)}{kR}>\mathcal{V}.
\end{equation}
From \eqref{maxgrowth} the maximum growth rate is obtained for $k_{max}$ such that
\begin{equation}\label{kmax_dirac}
\dfrac{\sin(k_{max}R)}{k_{max}R}+\cos(k_{max}R)=2\mathcal{V}_b.
\end{equation}

To discuss the stability properties it is now useful to distinguish two cases according to the sign of $b'$, which means the sign of $\mathcal{B}$ or of $\mathcal{V}_b$.

\subsubsection{Case $b'(\rho_{\infty})<0$}
We recall that, from the phenomenological point of view, a decreasing $b$ corresponds to a predisposition of cells to re-orient toward regions that are less crowded than the uniform stationary solution $\rho_{\infty}$. 

In this case, as $\mathcal{B}<0$, the instability condition \eqref{inst.cond.dirac} becomes
\begin{equation}\label{k_dirac_b_neg}
\dfrac{\sin(kR)}{kR}<\mathcal{V}_b<0,
\end{equation}
and the critical condition, $\ie$ the first value for which 
\begin{equation}\label{kcr_dirac}
\dfrac{\sin(kR)}{kR}=\mathcal{V}_b,
\end{equation}
is given by $\mathcal{V}_{b,cr}=\min_{x>0} \dfrac{\sin x}{x} = -m\approx -0.22$.

These relations can be eventually rewritten in terms of conditions on the cell density. In fact, if, for instance,  $b(\rho)=\left(1-\frac{\rho}{\rho_{th}}\right)_+$, then for $\rho_\infty<\rho_{th}$, 
$\mathcal{V}_{b}=-\mathcal{V}\left(\frac{\rho_{th}}{\rho_\infty}-1\right)$ 
and there is instability if $\rho_{\infty}>\rho_{th}\mathcal{V}/(\mathcal{V}+m)$.

The critical wave number is then obtained when 
\begin{equation}\label{kcr_crit}
\left[\dfrac{\sin(kR)}{kR}\right]'_{k=k_{cr}}=0 \quad \Longleftrightarrow \quad \tan (k_{cr}R) =k_{cr}R,
\end{equation}
that is when  $k_{cr}R=\bar{\alpha}\pi$ with $\bar{\alpha}\approx1.43$. Therefore, the critical wave length $\Lambda_{cr}=\dfrac{2\pi}{k_{cr}}$ is such that
\[
\Lambda_{cr}= \dfrac{2}{\bar\alpha}R \approx 1.42 R.
\]
If $\mathcal{V}_b <-m$ the system \eqref{eq:sis.p.lin} is always stable, while if $\mathcal{V}_b \in (-m,0)$ there are unstable  wave numbers.  

\begin{figure}
\centering
\subfigure[$b'(\rho_{\infty})<0$]{\includegraphics[scale=0.48]{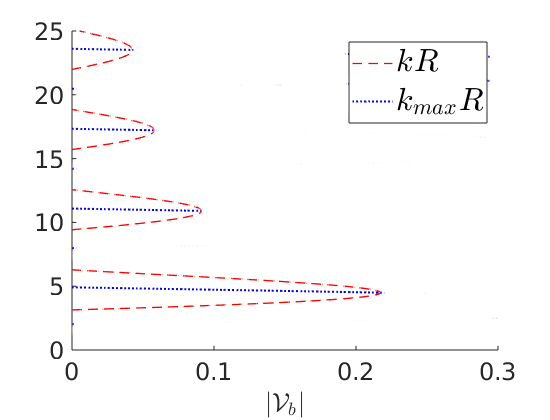}}
\subfigure[$b'(\rho_{\infty})>0$]{\includegraphics[scale=0.48]{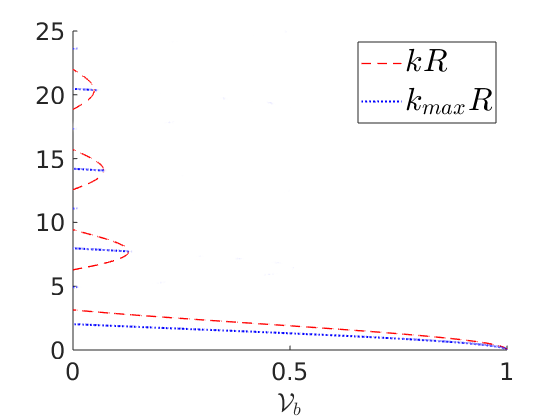}}

\caption{Stability diagram for a localized sensing kernel $\gamma_R=\delta(\lambda-R)$. The red dashed line delimits the unstable region, in (a) when $b'(\rho_{\infty})<0$ and in (b) when $b'(\rho_{\infty})>0$. The blue dotted lines evidentiate the dimensionless wave numbers $k_{max}R$ with local maxima of the growth rate (given by \eqref{kmax_dirac}) in the unstable regime. The lowest curves correspond to the most dangerous wave numbers in both cases.}
\label{fig.dirac}
\end{figure}

Using \eqref{kcr_dirac}, Figure \ref{fig.dirac}a identifies for any $|\mathcal{V}_b|<|\mathcal{V}_{b,cr}|\approx 0.22$ the range of unstable wave numbers and the instability region is the one to the left of the red dashed curve. In this unstable region local maximum growth rates are represented by the blue curves, with the longest waves (corresponding to the lowest curve) being the most unstable ones.

We then have instability of a finite wavelength if for instance, cell speed is sufficiently small, or the turning rate or the sensing radius is sufficiently high, or in the case of volume filling effects, if cell density if sufficiently high

\subsubsection{Case $b'(\rho_{\infty})>0$}
We recall that, from the phenomenological point of view, an increasing $b$  corresponds to a predisposition of cells to re-orient toward regions that are more crowded than in the uniform configuration $\rho_{\infty}$. This might be for instance related to an adhesion-like behaviour for cells that want to stay together. 
An example of this case is $b(\rho)=\rho$,  leading to $\mathcal{B}=1$ and $\mathcal{V}_b=\dfrac{V}{R\mu}$.

In this case the instability condition becomes
\begin{equation}\label{k_dirac_b_pos}
\dfrac{\sin(kR)}{kR}>\mathcal{V}_b>0.
\end{equation}
We may trivially observe that, as $\dfrac{\sin x}{x}  <1$, if $\mathcal{V}_b >1$ the uniform configuration  is always stable. 
On the other hand, if $\mathcal{V}_b \in [0,1)$ wave numbers to the left of the red dashed curves represented in Figure \ref{fig.dirac}(b) are unstable. Again the blue lines represent local maxima for the growth rates with the lowest curve corresponding to the most unstable dimensionless wave number.

\subsection{Uniform sensing kernel}\label{sec:Heaviside}
The same stability analysis can be performed for a sensing function $\gamma_R$ that is a Heaviside function, $\ie$ 
$\gamma_R(\lambda)=H(R-\lambda)$. 
In this case, 
\begin{equation}\label{gamma_Heaviside}
\hat\gamma_R(k)=\dfrac{1-\cos(kR)}{kR}\,.
\end{equation}
Contrary to the localized kernel, for the uniform kernel the first statement of the Proposition holds and the uniform configuration with $\rho=\rho_\infty$ is stable when $\mathcal{B}\le 0$, that is cells at that density prefer to avoid overcrowding.

On the other hand, if $\mathcal{B}>0$ the instability condition \eqref{inst.cond} becomes
\begin{equation}\label{inst.cond.heavi}
\dfrac{1-\cos(kR)}{k^2R^2}>\mathcal{V}_b.
\end{equation}
Therefore, there are unstable waves if $\mathcal{V}_b<\frac{1}{2}$
and, recalling \eqref{maxgrowth}, the maximum growth rate is achieved for $k=k_{max}$ such that 
\begin{equation}\label{kmax_heavi}
\dfrac{\sin(k_{max}R)}{k_{max}R}=2\mathcal{V}_b.
\end{equation}

Referring to Figure \ref{fig.heavi}(a), one then has instability in the region of the $\mathcal{V}_b-kR$ plane to the left of the  red dashed curve with the most unstable wave number identified by the lowest blue curve.

\begin{figure}
\begin{center}
\subfigure[]{\includegraphics[scale=0.45]{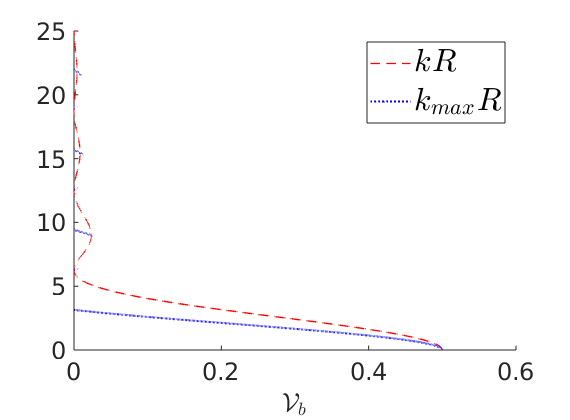}}
\subfigure[]{\includegraphics[scale=0.45]{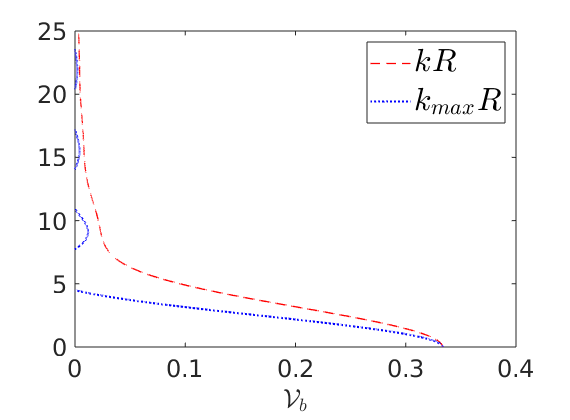}}
\end{center}
\caption{Stability diagram for a uniform (a) and a ramp (b) sensing kernel. The blue dotted line represents $k_{max}R$ given respectively by (a) \eqref{kmax_heavi}  and (b) \eqref{kmax_decr}. The unstable region is the one to the left of the red dashed line, $\ie$ the values of $k_{max}R$ also satisfy \eqref{inst.cond.heavi} in (a) and \eqref{inst.cond.heavi.decr} in (b). }
\label{fig.heavi}
\end{figure}

\subsection{Ramp sensing kernel}\label{sec:ramp}
A sensing function $\gamma_R$ that decreases with the distance from the present position of the cell means that the cell gives more importance to the local information than to distant ones. 
An example is given by the ramp function
\[
\gamma_R(\lambda)=\left(1-\dfrac{\lambda}{R}\right)_+.
\] 
where $(f)_+$ is the positive part of $f$.  In this case, 
\begin{equation}\label{gamma_ramp}
\hat\gamma_R(k)=2\dfrac{kR-\sin(kR)}{k^2R^2},
\end{equation}
Again, independently of the specific form of the decreasing kernel, the uniform configuration with $\rho=\rho_\infty$ is stable when $\mathcal{B}\le 0$, that is, if cells at that density prefer to avoid overcrowding.

On the other hand, if $\mathcal{B}>0$ the instability condition \eqref{inst.cond} becomes
\begin{equation}\label{inst.cond.heavi.decr}
2\dfrac{kR-\sin(kR)}{k^3R^3}>\mathcal{V}_b.
\end{equation}
Therefore, there are unstable waves if $\mathcal{V}_b<\frac{1}{3}$
and, recalling \eqref{maxgrowth}, the maximum growth rate is achieved for $k=k_{max}$ such that 
\begin{equation}\label{kmax_decr}
\dfrac{\sin(k_{max}R)-k_{max}R\cos(k_{max}R)}{k_{max}^3R^3}=\mathcal{V}_b.
\end{equation}

In Figure \ref{fig.heavi}(b), again one then has instability in the region of the $\mathcal{V}_b-kR$ plane to the left of the  red dashed curve with the most unstable wave identified by the lowest blue curve.

\section{Numerical tests}

\begin{figure}[!htbp]
\begin{center}
\subfigure[]{\includegraphics[scale=0.48]{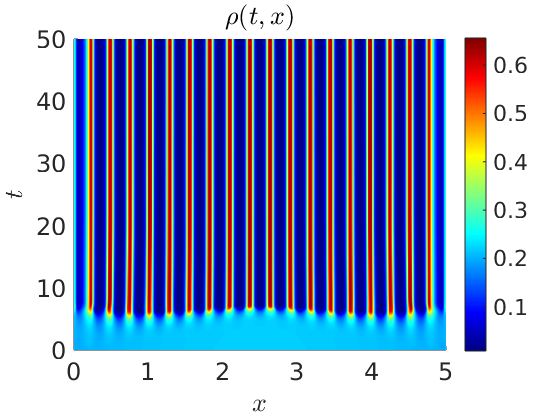}}
\subfigure[]{\includegraphics[scale=0.48]{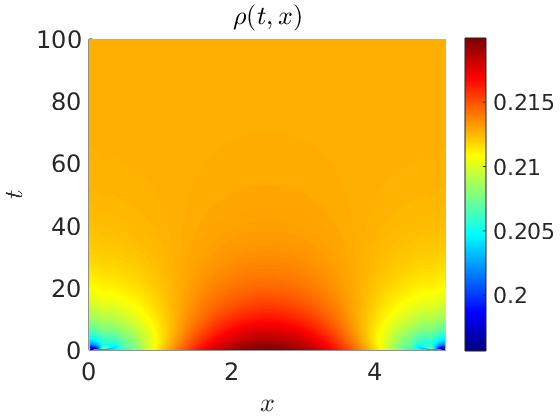}}
\subfigure[]{\includegraphics[scale=0.48]{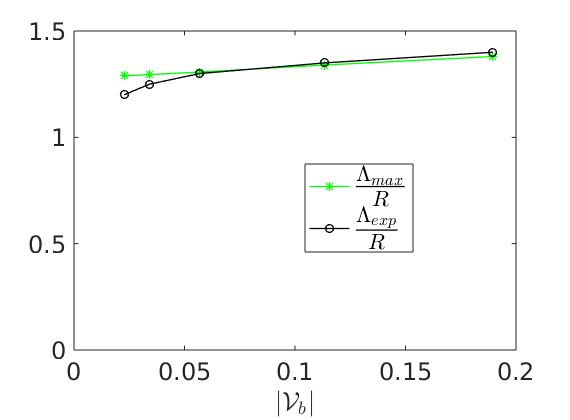}}
\end{center}
\caption{Temporal evolution of $\rho(t,x)$ from $\rho_0(x)=0.2\left(1+0.1\sin(\pi x/5)\right)$ (a) in the unstable case 
$\mathcal{V}_b\approx -0.1125$ and (b) in the stable case $\mathcal{V}_b\approx -0.3376$. 
 (c) Wavelength of the most unstable mode as obtained from the simulation (black line) and from Eq.\eqref{kmax_dirac} (green line).}
\label{simu_1}
\end{figure}

In this section we show some numerical tests in order to show the consistency of our linear stability analysis. We simulate equation \eqref{eq:dir.1D} with specular reflective or periodic boundary conditions, both satisfying Eq. \eqref{noflux}. Specular reflective boudary conditions in one dimension are such that  
\begin{equation}\label{spec_refl}
p^-(t,L,v)=p+(t,L,-v), \qquad p^+(t,0,v)=p^-(t,0,-v).
\end{equation}
We perform a first order splitting for the relaxation and transport step that we perform using a Van Leer scheme. For further details, we address the reader to \citep{Loy_Preziosi} and \citep{Filbet}.
We remark that in our case the $\psi(v|\pm {\bf e})$ is a Gaussian with mean $V^{\pm}$ and variance $s^2$. In particular we consider $s^2=10^{-4}$, so that $\psi$ is close in sense of measure to a Dirac delta and the perturbation of the dispersion relation is of the same order of $s^2$. 

\subsection{Volume filling dynamics}
In order to  mimick the dynamics of cells that are more likely to re-orient where there are less cells and tend to avoid overcrowded areas, corresponding to  the case $b'(\rho)<0$
we set
\[
b(\rho)=\left(1-\dfrac{\rho}{\rho_{th}}\right)_+ .
\]
For well posedness reasons, we shall always consider $\max_{\Omega}\rho_0(x)<\rho_{th}$. Otherwise, physical constraint effects should be taken into account as in \cite{Loy_Preziosi2}. Specifically, we consider a Dirac delta sensing function $\gamma$ and we set $\rho_{th}=0.5$ and an initial condition  
$\rho_0(x)=0.2\left(1+0.1\sin(\pi x/5)\right)$, so that $\rho_{\infty}\approx 0.2127$ corresponding to  
$\mathcal{B}\approx -0.73$. 

Recalling Figure \ref{fig.dirac} we have that when the kernel is a Dirac delta the critical value for $\mathcal{V}_b$ is about $-0.22$.
Having set $V=0.25$ and $R=0.2$, in Figure \ref{simu_1}(a) we use $\mu=15$, so that $\mathcal{V}=1/12$ and everywhere $|\mathcal{V}_b|$ is above the critical value  leading to instability. In fact, its value is $\mathcal{V}_b\approx -0.1125$. 

In Figure \ref{simu_1}(b) $V=0.25$, $R=0.2$ and $\mu=5$, so that $\mathcal{V}=1/4$ and $\mathcal{V}_b\approx -0.3376$. Therefore, the initial density distribution is always below the critical value and the perturbation decays to the homogeneous solution $\rho=\rho_{\infty}\approx 0.2127$. 


Finally, in Figure \ref{simu_1}(c) we compare the theoretical value of the wavelength  of the most unstable mode with the one obtained simulating the system. We find that they are very close, with a discrepancy of about 10\% for small values of $|\mathcal{V}_b|$ closer to zero and a practical coincidence for values above 0.06.

We recall that in this case the first statement of the Proposition does not hold while in the case of smoother non increasing kernels (e.g., the Heaviside or ramp kernels treated below), we always have stability.

\subsection{Adhesion}
We consider now the case in which cells prefer to stay together and reorient toward regions with more crowded areas. Specfically, we take $b(\rho)=\rho$ so that $b'(\rho)=1$ is positive. Therefore, in this case $\mathcal{B}=1$ regardless of the density distribution and then $\mathcal{V}_b=\mathcal{V}$. Having set $V=0.25$ and $R=0.04$, changes in  
$\mathcal{V}_b$ correspond to changes in $\mu$.


\begin{figure}[!htbp]
\begin{center}
\subfigure[localized kernel]{\includegraphics[scale=0.45]{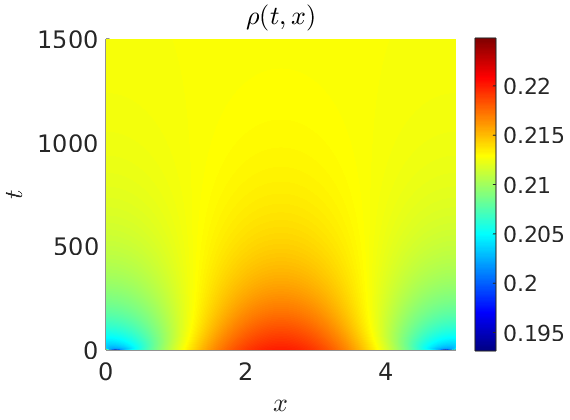}}
\subfigure[localized kernel]{\includegraphics[scale=0.45]{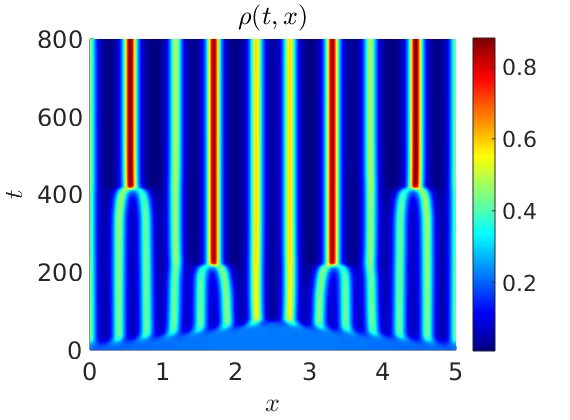}}
\end{center}
\begin{center}
\subfigure[uniform kernel]{\includegraphics[scale=0.45]{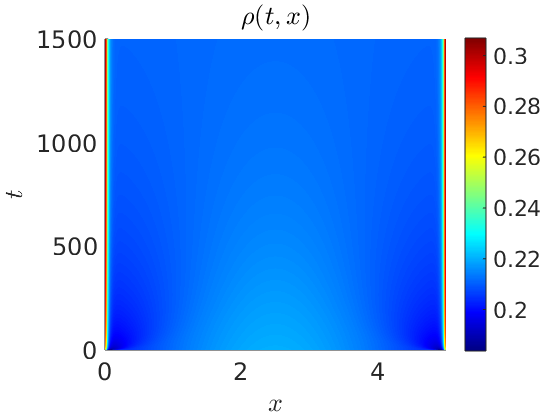}}
\subfigure[uniform kernel]{\includegraphics[scale=0.45]{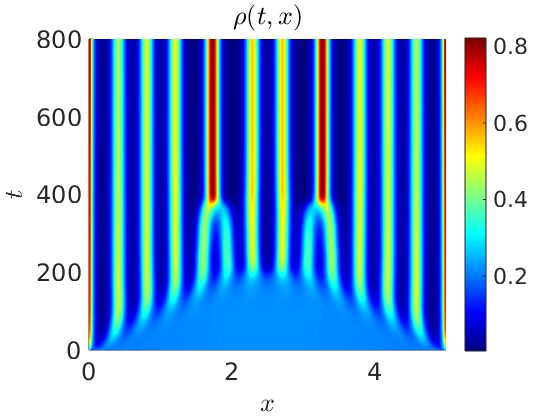}}
\end{center}
\caption{Evolution of the density distribution in the adhesion case starting from the initial condition 
$\rho_0(x)=0.2\left(1+0.1\sin(\pi x/5)\right)$, so that $\rho_{\infty}\approx 0.2127$,
 for a delta kernel (top row) and a Heaviside kernel (bottom row). 
In the left column the values of $\mathcal{V}_b$ correspond to stable cases, while in the right column $\mathcal{V}$ correspond to the unstable case. Specifically, in (a) $\mathcal{V}_b\approx 1.1364$, in (b) $\mathcal{V}_b\approx 0.7812$, in (c) $\mathcal{V}_b\approx 0.5564$, and  in (d) $\mathcal{V}_b\approx 0.3846$.
}
\label{simu_2}
\end{figure}

In Figure \ref{simu_2} we show linear stability and instability in the case of cell-cell adhesion and perfect reflective boundary conditions. The sensing function is a Dirac delta in the top row and a Heaviside function in the bottom row. 

Recalling Figure \ref{fig.dirac}, in Figure \ref{simu_2}(a), $\mu=5.5$, so that $\mathcal{V}_b\approx 1.1364>1$  corresponds to a stable condition, while in Figure \ref{simu_2}(b), $\mu=8$, so that $\mathcal{V}_b\approx  0.7812<1$ leads to an unstable condition.  
Recalling instead Figures \ref{fig.heavi}, in Figure \ref{simu_2}(c), $\mu=180$, so that $\mathcal{V}_b\approx 0.5564>0.5$, corresponds to a stable condition, while in Figure \ref{simu_2}(d), we $\mu=260$, so that $\mathcal{V}_b\approx 0.3846<0.5$ corresponds to an unstable condition. 
 

\begin{figure}[!htbp]
\begin{center}
\subfigure[localized lernel]{\includegraphics[scale=0.2]{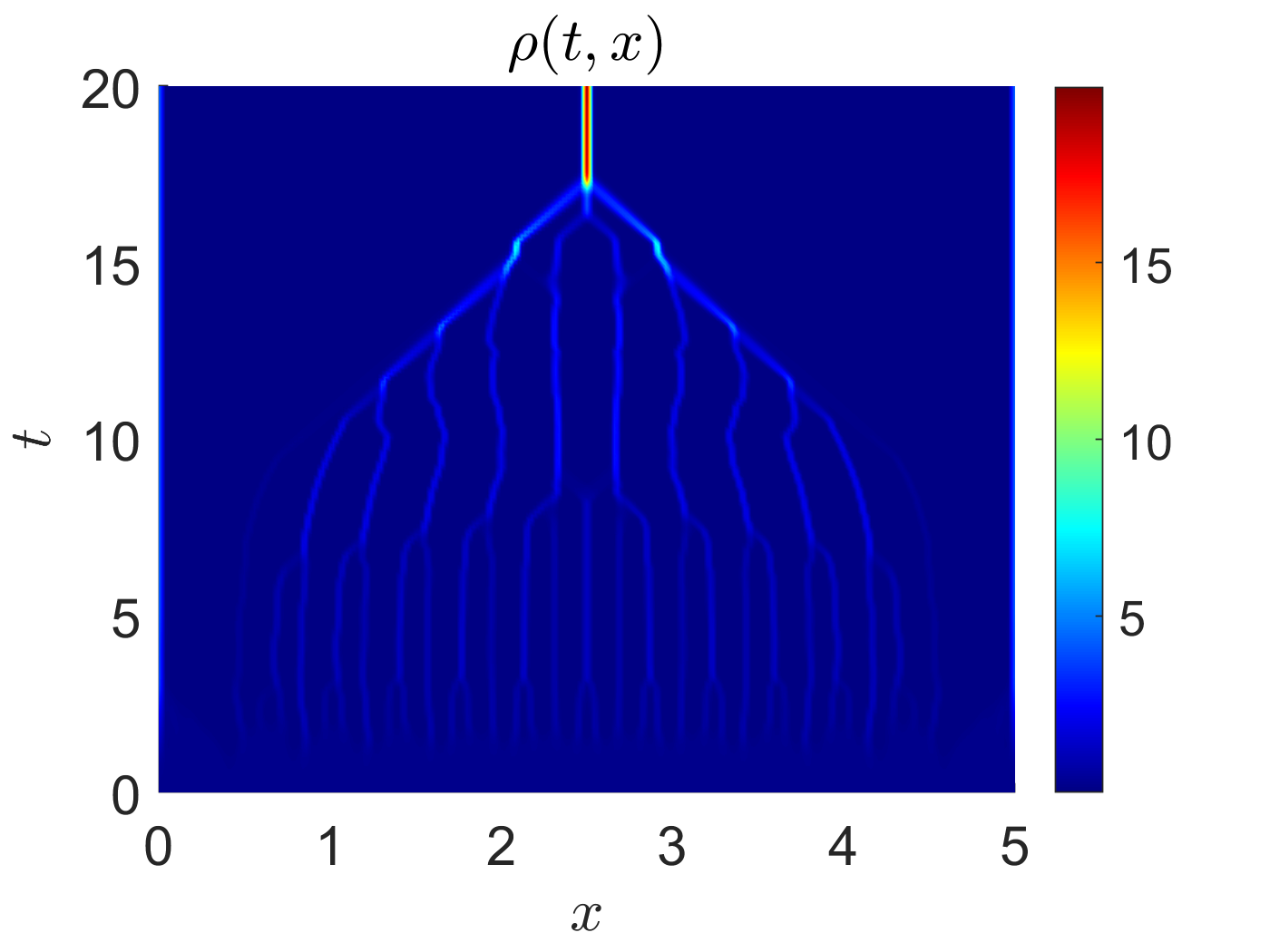}}
\subfigure[uniform kernel]{\includegraphics[scale=0.15]{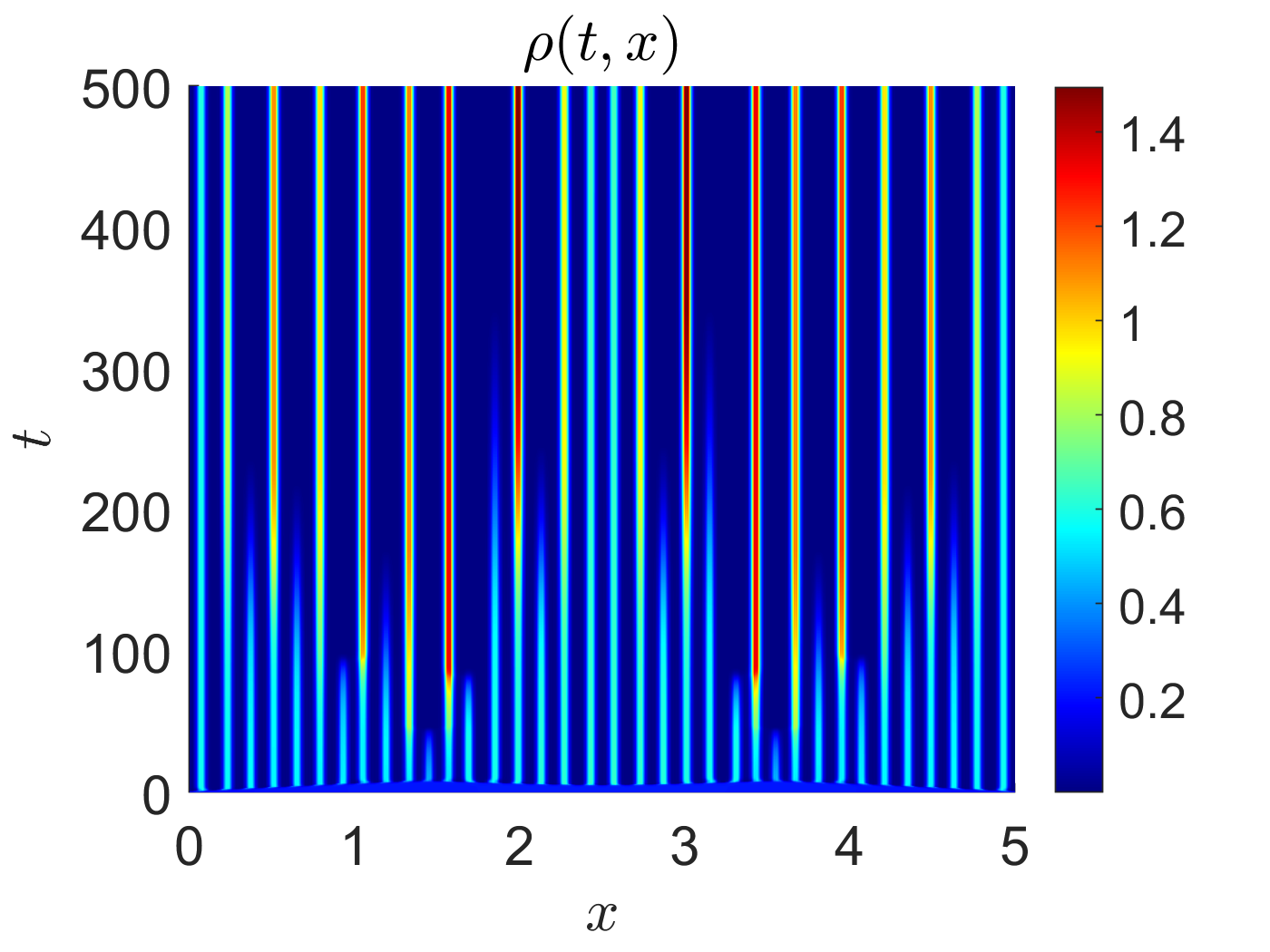}}
\subfigure[decreasing kernel]{\includegraphics[scale=0.5]{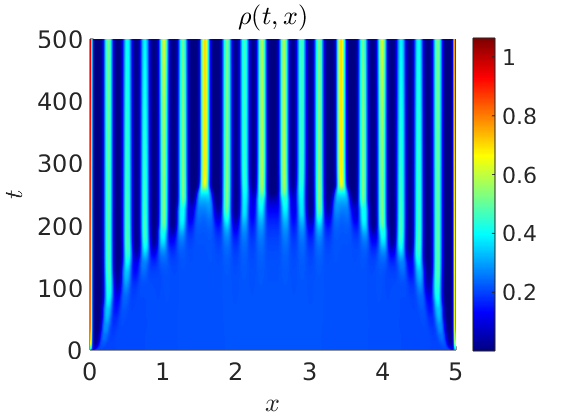}}
\end{center}
\caption{Comparison of unstable evolutions for the same value of  $\mathcal{V}=0.0625$, (given by $V=0.25, R=0.04, \mu=100$) starting form the initial condition $\rho_0(x)=0.2\left(1+0.1\sin(\pi x/5)\right)$, so that $\rho_{\infty}\approx 0.2127$. In (a) $\gamma_R(\lambda)=\delta(\lambda-R)$, in (b) $\gamma_R(\lambda)=H(R-\lambda)$ and in (c) $\gamma_R=\left(1-\frac{\lambda}{R}\right)_+$.}
\label{simu_4}
\end{figure}

Finally, in Figure \ref{simu_4} we compare the different evolution 
in case of three different sensing kernels, namely a Dirac delta, a Heavyside function, and a ramp kernel, for the same value of  $\mathcal{V}=0.0625$ that always fall in the  unstable range.

\subsection{The case of asymmetric $\psi$}

We now consider Eq. \eqref{transport.general} with \eqref{J_r.S} and \eqref{distribution.g.r} allowing the speed probability distribution to depend on $\hv$, i.e.  $\psi=\psi(v|\hv)$. In the one-dimensional case it means that cells that go to the right and to the left have different speed distributions.

Following the same procedure as in Section 3, we obtain again Eq.\eqref{eq:sist.p_gen} and the  local asymptotic equilibria \eqref{asympt}, but with 
with $T^+$ and $T^-$ defined as in 
\eqref{T+} and \eqref{T-} with $\psi(v)$ respectively substituted by $\psi(v|{\bf e})\equiv\psi^+(v)$ and $\psi(v|-{\bf e})\equiv\psi^-(v)$

The local asymptotic equilibria are  stationary and homogeneous if and only if 
\[
p^+_{\infty}(v)=\rho_{\infty}\dfrac{\psi^+(v)}{2}, \qquad p^-_{\infty}(v)=\rho_{\infty}\dfrac{\psi^-(v)}{2},
\]
which, on the contrary of the symmetric case, are no longer equal.

Carrying out the same computations as in Section \ref{Sec:lin.inst}, it is convenient to keep Eq.\eqref{solvability} in the following modified form
\begin{equation}\label{solvability2}
\dfrac{\mu}{2}\int_{\mathbb{R}_+}\left[
\dfrac{1+i\mathcal{B} \hat{\gamma}_R(k)}{(\sigma+\mu)+ikv}\psi^+(v)+
\dfrac{1-i\mathcal{B} \hat{\gamma}_R(k)}{(\sigma+\mu)-ikv}\psi^-(v)\right]
\,dv=1\,,
\end{equation}
being $\hat{\gamma}_R(k)$ defined as in \eqref{gamma_trans}.

In order to understand the stability behaviour we will consider $\psi^\pm(v)=\delta(v-V^\pm)$. 
This allows  to integrate \eqref{solvability2} to get
$$\mu
\left[\sigma+\mu+k\mathcal{B}\hat\gamma_R(k)\frac{V^{+}+V^{-}}{2}+ik\frac{V^+ -V^-}{2}\right]=
(\sigma+\mu)^2+k^2V^+ V^- +ik\frac{V^{+}-V^{-}}{2}(\sigma+\mu)\,,
$$
or
\[
\sigma^2+\sigma \left[ \mu +ik(V^+-V^-)\right]+k^2V^+V^- -\mathcal{B}\mu \hat{\gamma}_R(k) k
\dfrac{ V^+ +V^-}{2}+ i k\mu\dfrac{V^+-V^-}{2} =0\,.
\]
The dispersion relation then reads
\begin{equation}\label{disp.rel.gen.asimm}
\sigma=\dfrac{-\mu-i k \left( V^+-V^-\right)+\sqrt{\mu^2-k^2(V^++V^-)^2+2\mathcal{B}\mu \hat{\gamma}_R(k) k(V^++V^-)}}{2}\,,
\end{equation}
where we notice that the speed difference $(V^+-V^-)$ affects the imaginary part while the real part is affected by the mean speed $(V^++V^-)/2$. Furthermore, the fact that $V^+ \neq V^-$ implies that the wave frequencies $\sigma$ are always complex, hence we expect moving patterns.

According to the type of kernels one then has
\begin{equation}
\begin{array}{lcl}
\gamma_R(\lambda)=\delta(\lambda-R) \quad &\Longrightarrow&\quad
\mathcal{B} \,\dfrac{\sin(kR)}{kR}>\bar{\mathcal{V}}
\,,\\[12pt]
\gamma_R(\lambda)=H(R-\lambda) \quad &\Longrightarrow&\quad
\mathcal{B} \,\dfrac{1-\cos(kR)}{k^2R^2}>\bar{\mathcal{V}}
\,,\\[12pt]
\gamma_R(\lambda)=\left(1-\frac{\lambda}{R}\right)_+ \quad &\Longrightarrow&\quad
\mathcal{B} \,\dfrac{kR-\sin(kR)}{k^3R^3}>\bar{\mathcal{V}}
\,,
\end{array}
\end{equation}
where
\begin{equation}
\bar{\mathcal{V}}=
\dfrac{V^++V^-}{2\mu R}\,.
\end{equation}
that reflect the conditions \eqref{inst.cond.dirac}, \eqref{inst.cond.heavi}, and \eqref{inst.cond.heavi.decr} found in the symmetric case.

\begin{figure}[!htbp]
\begin{center}
\subfigure[Perfect relfection BCs]{\includegraphics[scale=0.3]{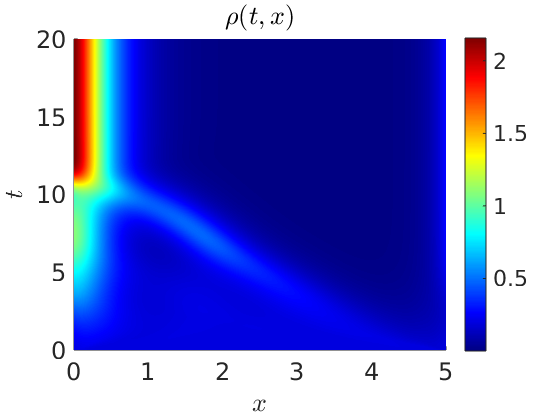}}
\subfigure[Periodic BCs]{\includegraphics[scale=0.3]{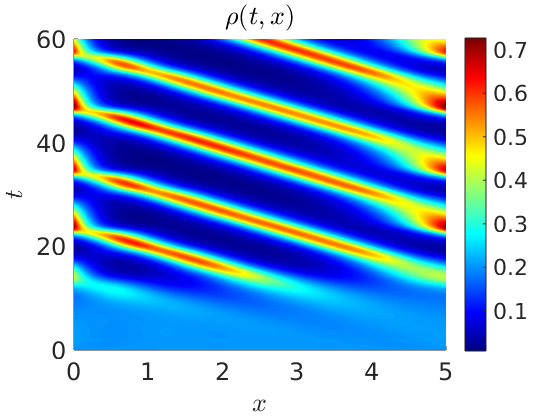}}
\subfigure[]{\includegraphics[scale=0.3]{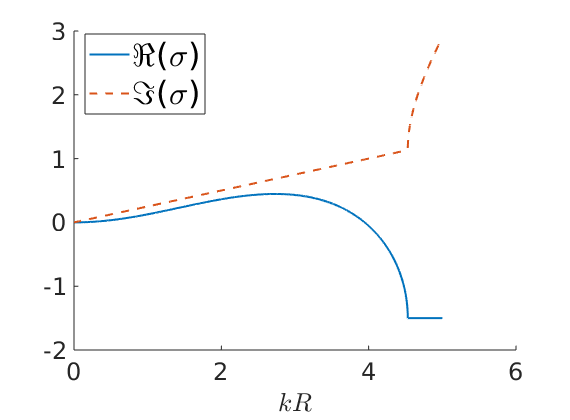}}
\\
\subfigure[Perfect relfection BCs]{\includegraphics[scale=0.3]{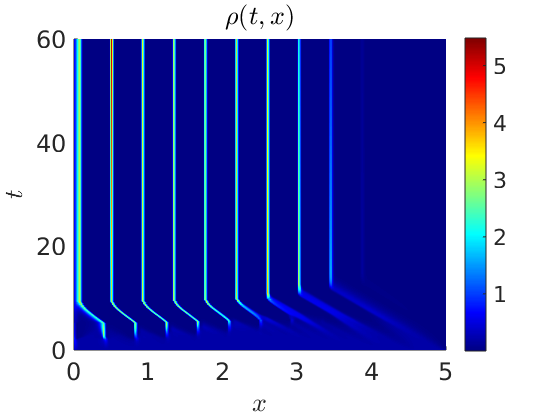}}
\subfigure[Periodic BCs]{\includegraphics[scale=0.3]{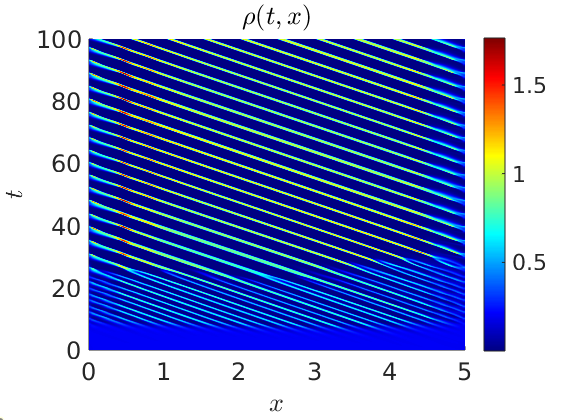}}
\subfigure[]{\includegraphics[scale=0.3]{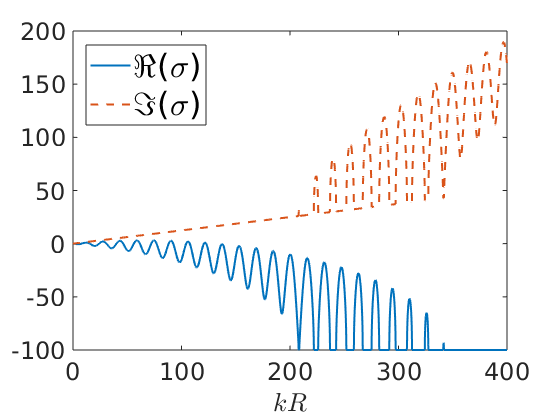}}
\\
\subfigure[Periodic BCs]{\includegraphics[scale=0.3]{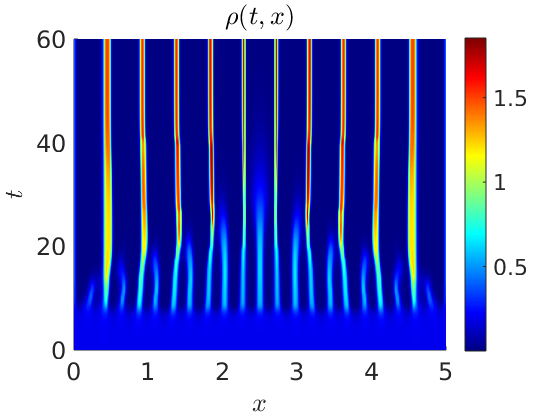}}
\subfigure[Periodic BCs]{\includegraphics[scale=0.3]{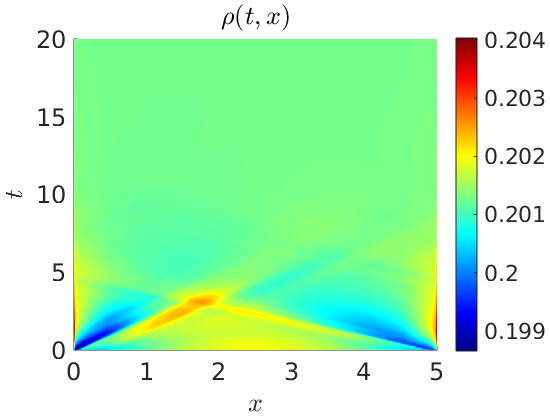}}
\subfigure[]{\includegraphics[scale=0.3]{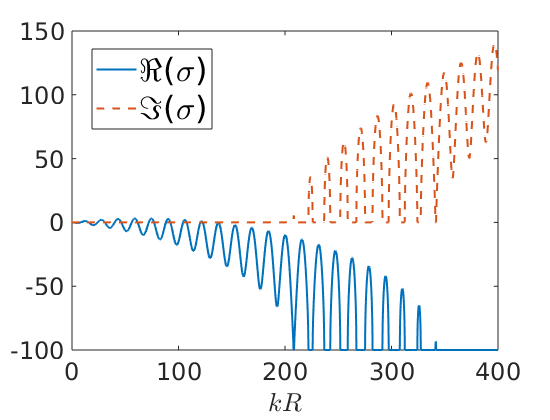}}
\end{center}
\caption{Evolution in the asymmetric case for a localised sensing kernel. The initial condition is always $\rho_0(x)=0.2\left(1+0.01\sin(\pi  x/5)\right)$, so that $\rho_{\infty}\approx 0.2013$. 
(a)-(c): Adhesion, $\mu=3, R=0.44, V^+=1, V^-=0.5, \mathcal{V}\approx _b 0.5682$. 
(d)-(f):  Volume filling, $\mu=200, R=0.4, V^+=0.25,V^-=0.5,\rho_{th}=1, \mathcal{V}_b\approx -0.018$. 
(g):   Volume filling, $\mu=200, R=0.4, V^+=V^-=0.375,\rho_{th}=1,  \mathcal{V}_b\approx -0.018$.
(h)-(i): Adhesion, $\mu=1, R=0.44, V^+=0.5, V^-=1, \mathcal{V}_b\approx  1.7045$.
 }
\label{simu_5}
\end{figure}

In Figure \ref{simu_5} we present some tests in the asymmetric case both with perfectly reflective boundary conditions and periodic boundary conditions in the case of a Dirac delta sensing function. In (a)-(c) we consider adhesion, i.e. $b(\rho)=\rho$ in an unstable situation. In particular, Figure \ref{simu_5}(c) shows that the imaginary part of the eigenvalue is always stricly positive and in fact both in (a) and (b) there is a moving pattern. In (d)-(f) we have unstable configurations in the case of volume filling Figure \ref{simu_5}(f) shows again that as $V^+ \neq V^-$ there are complex eigenvalues and therefore we have a moving pattern in the direction $-{\bf  e}$ that is what we expect as $V^+<V^-$. In Figure \ref{simu_5}(g) we present the test with the same value of $\mathcal{V}_b$ as in Figure \ref{simu_5}(d)-(e) but with $V^+=V^-$.  As $V^+=V^-$ in Figure \ref{simu_5}(g), the pattern is symmetric and there are real eigenvalues (see Figure (i)). In Figure \ref{simu_5}(g) the boundary conditions are periodic, but, as $V^+=V^-$, we would obtain the same result with specular reflection boundary conditions. In Figure \ref{simu_5}(h) we instead consider adhesion and we have a stable configuration as $\mathcal{V}_b>1$, even if we have an anisotropic setting as $V^+ \neq V^-$: the transient shows an asymmetric behavior, but the solution goes to a stationary homogeneous case.

\section{Discussion}
We analyzed the stability properties of a non-local kinetic equation implementing a velocity jump process in which the transition probability models a directional response to a non-local evaluation of the macroscopic cell density. We identified a stability condition \eqref{inst.cond} that depends on two non-dimensional parametres: $\mathcal{B}$ that takes into account of the directional response  through a non-local measure of the cell density weighted by a sensing kernel $b(\rho)$, and $\mathcal{V}$, that takes into account of the motility properties of cells, specifically their mean speed, sensing radius and tumbling frequency. We proved that if $\mathcal{B}$ is negative, corresponding to a response that tends to avoid crowding, then  instability can occur only for sensing functions that are not non increasing and for $\mathcal{V}/|\mathcal{B}|$ small enough. On the other hand, if $\mathcal{B}$ is positive, corresponding to an adhesion-like behavior, then the homogeneous solution is unstable to long wave for values of $\mathcal{V}/\mathcal{B}$ that are sufficiently small.
Considering that $\mathcal{V}=\frac{V}{\mu R}$, this means that for fixed $\mathcal{B}$ instability occurs in stiff regimes, e.g. a large tumbling frequency leads to instability. 

On the other hand, if $\mathcal{V}$ is fixed, the stiffness of the response $b(\rho_{\infty})$ determines the transition from stability to instability. In the case of increasing $b$ at $\rho_{\infty}$, cells tend to go towards zones that are more crowded. If $b'(\rho_{\infty}$) is sufficiently high, there is linear instability, for all the analyzed sensing functions. If $b'(\rho_{\infty})$ is negative, cells tend to go towards regions that are less crowded. In this case instability can be triggered  for example in the case in which the sensing function is a Dirac delta, $\ie$ when cells do not evaluate the density in between their position and the sensing position $x\pm R$.

Numerical simulations show that the linear stability analysis predicts pattern formation (or stability) quite sharply and it is able to catch the characteristic wavelengths of the pattern.

We restricted to the case in which cell density only affects the direction of motion, e.g., cells turn away if they sense an overcrowded area. Of course, cell density can also affect their speed. This case is currently under study. Actually, the most interesting case is the one in which cell density has an ambivalent effect. For instance, cells are attracted by the other cells because of cell-cell adhesion but at the same time they want to stay away from overcrowding and they do not want to overcome a certain threshold density (volume filling).

\section*{Acknowledgements}
This work was partially supported by Istituto Nazionale di Alta Matematica, Ministry of Education, Universities and Research, through the MIUR grant Dipartimenti di Eccellenza 2018-2022, Project no. E11G18000350001, and the Scientific Reseach Programmes of Relevant National Interest project n. 2017KL4EF3. NL also acknowledges Compagnia di San Paolo that funds her Ph.D. scholarship.

\bibliography{references}
\newpage

\end{document}